\newtheorem{thm}{Theorem}
\newtheorem{lem}{Lemma}
\newcommand{\EE}{\mathbb{E}}
\renewcommand{\d}{\operatorname{d}\!}
\begin{document}

\begin{frontmatter}
\title{Monotonicity of Entropy and Fisher Information:\\ A Quick Proof via Maximal Correlation}
\runtitle{Monotonicity of Entropy and Fisher Information: A Quick Proof  }
\runtitle{Monotonicity of Entropy and Fisher Information: A Quick Proof }

\begin{aug}
\author{\fnms{Thomas} \snm{A.~Courtade}\ead[label=e1]{courtade@berkeley.edu}}
\address{Department of Electrical Engineering and Computer Sciences\\
University of California, Berkeley\\
Berkeley, CA 94720\\
USA.\\
\printead{e1}}

\thankstext{t1}{This work supported by the Center for Science of Information (CSoI), a NSF Science and Technology Center, under grant agreement CCF-0939370.}
\runauthor{Thomas~A.~Courtade}

\affiliation{Some University and Another University}

\end{aug}

\begin{abstract}
A simple proof is given for the monotonicity of  entropy and Fisher information associated to  sums of i.i.d.\ random variables.  The proof relies on a characterization of maximal correlation for partial sums due to Dembo, Kagan and Shepp. 
\end{abstract}

\begin{keyword}
\kwd{Entropy}
\kwd{Fisher information}
\kwd{Maximal Correlation}
\end{keyword}
\end{frontmatter}

\section{Introduction}
Assume throughout that $X$ is a random variable with density $f$ and finite variance.  The entropy $h(X)$ and, under mild regularity conditions on $f$, the Fisher information $J(X)$ are defined  via 
\begin{align*}
&h(X) = -\EE[\log f(X)],  &J(X) = \EE\left[ \rho^2_X(X)\right],
\end{align*}
where $\rho_X = f'/f$ denotes the score function associated to $X$.

Let $X_1, X_2, \dots$ be i.i.d.\ copies of $X$ and define 
$S_n = X_1 + \cdots + X_n$, $n\geq 1,$ and its standardized counterpart $U_n = \tfrac{1}{\sqrt{n}}S_n$.  Two celebrated results established by {Artstein}, Ball, Barthe and Naor  \cite{artstein2004solution} are:  
\begin{enumerate}[i)]
\item the entropies $h(U_n)$ are non-decreasing in $n$; and 
\item the Fisher informations $J(U_n)$ are non-increasing in $n$.  
\end{enumerate}
In other words, the respective central limit theorems for entropy \cite{barron1986entropy} and Fisher information \cite{johnson2004fisher} enjoy monotone convergence (the latter holding under mild regularity conditions on $f$). 

The aim of this  note is to point out a simple and brief proof of these facts using a characterization of maximal correlation for sums of i.i.d.\ random variables. 

\section{Monotonicity of Fisher Information and Entropy}
The maximal correlation associated to a random pair $X,Y$ is defined (in one of its equivalent forms) as 
\begin{align}
r^2(X;Y) = \sup_{\vartheta} \frac{\EE [ \left|\EE[\vartheta(X)|Y] \right|^2  ] }{ \EE[ \left| \vartheta(X)\right|^2  ]  } ,\label{maxCorrDef}
\end{align}
where the supremum is  over all non-constant, real-valued functions $\vartheta$ with $\EE \vartheta(X)=0$.  An unexpected property enjoyed by $r^2$, discovered by Dembo, Kagan and Shepp, is that  $r^2(S_m;S_n) = m/n$ for $1\leq m \leq n$. A  brief proof of the Dembo-Kagan-Shepp identity has been recently obtained by Kamath and Nair  using  information-theoretic arguments \cite{kamath2015strong}.  

As a consequence,   if $\vartheta:\mathbb{R}\to \mathbb{R}$ satisfies $\EE \vartheta(S_m)=0$ and is non-constant, then definition \eqref{maxCorrDef} combined with the Dembo-Kagan-Shepp identity yields 
\begin{align}
& \EE [ \left|\EE[\vartheta(S_m)|S_n] \right|^2  ]   \leq  \frac{m}{n}  \EE[ \left| \vartheta(S_m)\right|^2  ]   &1\leq m\leq n. \label{maxCorr}
\end{align}

The contraction \eqref{maxCorr} is the first ingredient in our proof, and we shall require one more: the  behavior of  score functions under convolution, first noted by Stam \cite{stam1959some}.
\begin{lem}\label{scoreFunction}
Let $U,V$ be  independent random variables with smooth densities and put $W = U+V$.  If $\rho_U$ and $\rho_W$ denote the score functions of $U$ and $W$ respectively, then
\begin{align}
\rho_{W}(w) = \EE[\rho_{U}(U) | W=w ] .\label{scoreFunctionConv}
\end{align}
\end{lem}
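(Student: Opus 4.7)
The plan is to start from the convolution formula for the density of $W$, differentiate, and then recognize the resulting integrand as an expectation against the conditional density of $U$ given $W$.

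First I would write $f_W(w) = \int f_U(u)\, f_V(w-u)\, \d u$. Differentiating in $w$ under the integral sign and then using the identity $f_V'(w-u) = -\tfrac{d}{du}f_V(w-u)$ followed by integration by parts (with boundary terms vanishing, since the densities are assumed smooth enough to ensure $f_U f_V \to 0$ at infinity), one obtains
\begin{align*}
f_W'(w) = \int f_U'(u)\, f_V(w-u)\, \d u = \int \rho_U(u)\, f_U(u)\, f_V(w-u)\, \d u.
\end{align*}

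Next I would divide by $f_W(w)$. The conditional density of $U$ given $W = w$ is $f_{U|W}(u \mid w) = f_U(u) f_V(w-u)/f_W(w)$ (by independence of $U$ and $V$), so
\begin{align*}
\rho_W(w) = \frac{f_W'(w)}{f_W(w)} = \int \rho_U(u)\, f_{U|W}(u\mid w)\, \d u = \EE[\rho_U(U)\mid W = w],
\end{align*}
which is \eqref{scoreFunctionConv}.

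The only real subtlety lies in the analytic steps — justifying differentiation under the integral sign and the vanishing of boundary terms in the integration by parts — but these are precisely what the smoothness hypothesis on the densities is meant to handle, so I would invoke it as a blanket regularity assumption rather than dwell on the technicalities.
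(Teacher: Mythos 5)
Your argument is correct and is exactly the standard computation the paper invokes: it proves \eqref{scoreFunctionConv} by differentiating the convolution formula under the integral sign and appealing to smoothness for the regularity steps, citing Johnson's book rather than writing out the details. Your write-up simply makes that argument explicit (the integration by parts can even be avoided by substituting $u \mapsto w-v$ before differentiating), so there is nothing to correct.
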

Identity \eqref{scoreFunctionConv} is proved by exchanging orders of differentiation and integration, and is justified by smoothness of densities (e.g., \cite[Lemma 1.20]{johnson2004information}).

\begin{thm}[Monotonicity of Fisher Information] \label{thm:J}
Assume $X$ has  smooth density.  For $1\leq m \leq n$, $J(U_{n}) \leq J(U_{m})$.
\end{thm}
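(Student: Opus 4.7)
The plan is to combine the score-function identity of Lemma \ref{scoreFunction} with the maximal correlation contraction \eqref{maxCorr}, after first translating the claim into a statement about unnormalized partial sums.

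First I would note the scaling $J(U_n) = n\, J(S_n)$ (since Fisher information satisfies $J(aX) = J(X)/a^2$), so the desired inequality $J(U_n)\le J(U_m)$ is equivalent to
\begin{align*}
J(S_n) \le \frac{m}{n} J(S_m), \qquad 1\le m\le n.
\end{align*}
This is the form that will match \eqref{maxCorr} exactly.

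Next I would apply Lemma \ref{scoreFunction} with $U = S_m$ and $V = X_{m+1}+\cdots+X_n$, which are independent and whose sum is $W = S_n$. The lemma yields $\rho_{S_n}(S_n) = \EE[\rho_{S_m}(S_m)\mid S_n]$. Squaring and taking expectation gives
\begin{align*}
J(S_n) = \EE\bigl[\,\bigl|\EE[\rho_{S_m}(S_m)\mid S_n]\bigr|^2\,\bigr].
\end{align*}

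Now I would invoke \eqref{maxCorr} with the test function $\vartheta = \rho_{S_m}$. The admissibility conditions are easily checked: $\EE[\rho_{S_m}(S_m)] = \int f_{S_m}'(s)\,\d s = 0$ by smoothness and integrability of the density, and we may assume $\rho_{S_m}$ is non-constant since otherwise $\rho_{S_m}\equiv 0$ and $J(S_m)=0$, in which case the inequality is trivial. The contraction then immediately gives
\begin{align*}
\EE\bigl[\,\bigl|\EE[\rho_{S_m}(S_m)\mid S_n]\bigr|^2\,\bigr] \le \frac{m}{n}\,\EE[\rho_{S_m}(S_m)^2] = \frac{m}{n}\,J(S_m),
\end{align*}
which combined with the previous display yields the claim.

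I do not anticipate a genuine obstacle: every ingredient is already in place, and the proof is essentially a one-line combination of Lemma \ref{scoreFunction} and the inequality \eqref{maxCorr}. The only minor points to verify are the scaling identity for $J$ and the mean-zero property of the score, both of which are standard under the smoothness hypothesis on $f$.
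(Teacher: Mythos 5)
Your proposal is correct and follows essentially the same route as the paper: apply Lemma \ref{scoreFunction} to write $\rho_{S_n}(S_n)=\EE[\rho_{S_m}(S_m)\mid S_n]$, feed $\vartheta=\rho_{S_m}$ into the contraction \eqref{maxCorr} to get $J(S_n)\le \tfrac{m}{n}J(S_m)$, and finish with the scaling $\alpha^2 J(\alpha X)=J(X)$. The only difference is cosmetic (you state the scaling reduction first and spell out the admissibility checks, which the paper handles in one line).
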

\begin{proof}
By Lemma \ref{scoreFunction}, we have $\rho_{S_n}(s) = \EE[\rho_{S_m}(S_m) | S_n=s ]$.  Moreover, $\EE \rho_{S_m}(S_m)=0$, so that $\vartheta = \rho_{S_m}$ is a valid choice in  \eqref{maxCorr}.  Hence, from the definition of Fisher information and \eqref{maxCorr}, we conclude  
\begin{align}
J(S_n) = \EE[\rho_{S_n}^2(S_n) ] &= \EE[ \left| \EE\left[ \rho_{S_{m}}(S_{m})|S_n  \right]\right|^2 ]\notag \\
& \leq \frac{m}{n}\EE[\rho^2_{S_{m}}(S_{m})] = \frac{m}{n} J(S_{m}).\label{mainInequality}
\end{align}
Noting  the scaling property $\alpha^2 J\left( {\alpha} X\right) = J(X)$   finishes the proof. 
\end{proof}
Exactly as in \cite{artstein2004solution}, the entropy counterpart follows directly from a standard semigroup argument, which derives from Stam's seminal paper \cite{stam1959some}.  We include it for completeness. 

\begin{thm}[Monotonicity of Entropy]\label{thm:H}
 For $1\leq m \leq n$, $h(U_{m}) \leq h(U_{n})$.
\end{thm}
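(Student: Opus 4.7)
The plan is to mollify $X$ by independent Gaussian noise, apply Theorem~\ref{thm:J} to the smoothed variables, and transfer the resulting Fisher inequality to entropy via the de~Bruijn identity. Let $Z, Z_1, Z_2, \dots$ be i.i.d.\ standard normal random variables independent of $(X_i)$, and for $t>0$ define
\[
g_k(t) := h\bigl(U_k + \sqrt{t}\,Z\bigr).
\]
The key observation is that $U_k + \sqrt{t}\,Z$ is equal in distribution to $\tfrac{1}{\sqrt{k}}\sum_{i=1}^k(X_i + \sqrt{t}\,Z_i)$, so it is the standardized sum of $k$ i.i.d.\ copies of $Y := X + \sqrt{t}\,Z$. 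Since $Y$ automatically has a smooth density, Theorem~\ref{thm:J} applies to $Y$ and yields
\[
J\bigl(U_n+\sqrt{t}\,Z\bigr) \le J\bigl(U_m+\sqrt{t}\,Z\bigr), \qquad t>0,\ 1\le m\le n.
\]

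Next I would combine this with the de~Bruijn identity $g_k'(t) = \tfrac12 J(U_k+\sqrt{t}\,Z)$ and integrate on $[t,T]$ to obtain
\[
g_n(T) - g_m(T) \;\le\; g_n(t) - g_m(t), \qquad 0 < t < T.
\]

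The main technical step is to show that the left-hand side vanishes as $T\to\infty$. I would sandwich $g_k(T)$ between Shannon's entropy power inequality, $g_k(T) \ge \tfrac12\log\bigl(e^{2h(U_k)} + 2\pi e T\bigr)$, and the Gaussian maximum-entropy bound, $g_k(T) \le \tfrac12\log\bigl(2\pi e(\mathrm{Var}(X) + T)\bigr)$. Both expressions are of the form $\tfrac12\log(2\pi e T) + o(1)$ as $T\to\infty$, so $g_n(T) - g_m(T) \to 0$, leaving $g_m(t) \le g_n(t)$ for every $t>0$. Sending $t \to 0^+$ and invoking the standard fact that $g_k(t) \to h(U_k)$ under Gaussian mollification (valid whenever the limiting entropy is well defined) completes the proof; justifying this last continuity is the only remaining obstacle, and it is routine under the paper's blanket assumptions.
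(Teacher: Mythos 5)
Your argument is correct and is essentially the paper's: both mollify the summands by independent Gaussian noise so that Theorem~\ref{thm:J} applies to the smoothed variables (using that mollification commutes with forming the standardized i.i.d.\ sum), and then transfer the resulting Fisher-information inequality to entropy by integrating de~Bruijn's identity along the semigroup. The only difference is that the paper runs the Ornstein--Uhlenbeck flow, which converges to a standard Gaussian and so normalizes the boundary term at infinity automatically, whereas your heat-flow version handles it with the EPI/maximum-entropy sandwich as $T\to\infty$; this is a cosmetic variation, not a different route.
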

\begin{proof}
For a random variable $Z$ with unit variance, define the Ornstein-Uhlenbeck evolutes $Z_t = e^{-t} Z + (1-e^{-2t})^{1/2}G$, where $G$ is standard normal independent of $Z$. Note that $Z_t$ has smooth density for $t>0$.  By de Bruijn's identity (e.g.,  \cite{stam1959some},\cite[Appendix C]{johnson2004information}),
\begin{align}
h(G) - h(Z)= \int_{0}^{\infty}(J(Z_t) -1) \d t.
\end{align} 
Using these facts, we find that Theorem \ref{thm:H} follows from Theorem \ref{thm:J} by considering the Ornstein-Uhlenbeck evolutes of the $X_i$'s (and consequently $U_m$ and $U_n$) and integrating along the semigroup. 
\end{proof}

\section{Historical Remarks}\label{sec:remarks}

Suggested by Shannon's entropy power inequality (EPI), monotonicity of entropy was a long-held conjecture that was eventually verified in 2004 when Artstein, Ball, Barthe and Naor (ABBN)  established a    `\emph{leave-one-out}' EPI for sums of independent  random variables using  a variational characterization of Fisher information \cite{artstein2004solution}. Their results imply that the Fisher information and entropy associated to  sums of independent -- but not necessarily identically distributed -- random variables enjoy a  monotonicity property that is more general than what we have proved in the present note.   Since then,    another proof of the ABBN inequality was given by Tulino and Verd\'u \cite{tulino2006monotonic} using information-estimation relationships, and Shlyakhtenko has proved a free probability extension in \cite{Shlyakhtenko}.  Finally, we note that Madiman and Barron \cite{madiman2006, madiman2007generalized}  and Madiman and Ghassemi \cite{madiman2009} have extended the ABBN  results to sums of arbitrary  subsets of independent, non-identically distributed random variables.

 It is interesting to note that the Dembo-Kagan-Shepp inequality \eqref{maxCorr} has been known since 2001, but apparently has not been connected to proving  monotonicity of entropy  until now. In retrospect, however, this connection should not be surprising.  Indeed,  all of the above referenced proofs (including that of Dembo, Kagan and Shepp \cite{dembo2001remarks}) critically hinge on variations of a  `variance drop' inequality due to Hoeffding \cite{hoeffding1948class}; once an appropriate variance drop inequality is identified,   the respective proofs and that given for Theorem \ref{thm:J} above follow a similar program.  The only notable exception in this regard is the proof of \eqref{maxCorr} by Kamath and Nair \cite{kamath2015strong}, which favors an information inequality over a variance drop inequality.  
  In any case, the brief proof of Theorem \ref{thm:J}    illustrates that monotonicity of entropy and Fisher information may be viewed as a direct consequence of the contraction  
$\EE[ \left| \EE\left[ \vartheta(S_{m})|S_n  \right]\right|^2 ] \leq \frac{m}{n}\EE[|\vartheta(S_{m})|^2]$, and may be of   interest to those familiar with the Dembo-Kagan-Shepp  maximal correlation identity, or the Kamath-Nair strong data processing result. 

Finally, we observe that the proof of \eqref{maxCorr} in \cite{kamath2015strong} goes through verbatim for random vectors, so the argument above extends immediately to the multidimensional setting.

\section*{Acknowledgements}
The author thanks Mokshay Madiman and an anonymous referee for helpful comments that improved the historical remarks.

\vskip3ex


\begin{thebibliography}{9}

\bibitem{artstein2004solution}
S.~Artstein, K.~Ball, F.~Barthe, and A.~Naor, ``Solution of {S}hannon's problem
  on the monotonicity of entropy,'' \emph{Journal of the American Mathematical
  Society}, vol.~17, no.~4, pp. 975--982, 2004.

\bibitem{barron1986entropy}
A.~R. Barron, ``Entropy and the central limit theorem,'' \emph{The Annals of
  probability}, pp. 336--342, 1986.




\bibitem{johnson2004fisher}
O.~Johnson and A.~Barron, ``{F}isher information inequalities and the central
  limit theorem,'' \emph{Probability Theory and Related Fields}, vol. 129,
  no.~3, pp. 391--409, 2004.

\bibitem{dembo2001remarks}
A.~Dembo, A.~Kagan, and L.~A. Shepp, ``Remarks on the maximum
  correlation coefficient,'' \emph{Bernoulli}, vol.~7, no.~2, pp. 343--350,
  2001.

\bibitem{kamath2015strong}
S.~Kamath and C.~Nair, ``The strong data processing constant for sums of iid
  random variables,'' \emph{Proceedings of the 2015 IEEE International Symposium on Information Theory}, Hong Kong, June 2015.


\bibitem{stam1959some}
A.~J. Stam.
\newblock Some inequalities satisfied by the quantities of information of
  {F}isher and {S}hannon.
\newblock {\em Information and Control}, 2(2):101--112, 1959.




\bibitem{johnson2004information}
O.~Johnson, \emph{Information theory and the central limit theorem}. Vol. 8. London: Imperial College Press, 2004.



\bibitem{tulino2006monotonic}
A.~M. Tulino and S.~Verd{\'u}, ``Monotonic decrease of the non-{G}aussianness
  of the sum of independent random variables: A simple proof,''
  \emph{IEEE Transactions on Information Theory}, vol.~52, no.~9, pp.
  4295--4297, 2006.

\bibitem{Shlyakhtenko}
D.~Shlyakhtenko, ``Shannon's monotonicity problem for free and classical entropy,''
  \emph{Proc. Nat. Acad. Sci.}, vol.~104, no.~39, pp.
  15254--15258, 2007.
  
  
\bibitem{madiman2006}
M.~Madiman and A.~R.~Barron, ``The Monotonicity of Information in the Central Limit Theorem and Entropy Power Inequalities," \emph{Proceedings of the 2006 IEEE International Symposium on Information Theory}, Seattle, Washington, July 2006.

\bibitem{madiman2007generalized}
M.~Madiman and A.~R.~Barron, ``Generalized entropy power inequalities and
  monotonicity properties of information,'' \emph{IEEE
  Transactions on Information Theory}, vol.~53, no.~7, pp. 2317--2329, 2007.

\bibitem{madiman2009}
M.~Madiman and F.~Ghassemi, ``The Entropy Power of a Sum is Fractionally Superadditive," \emph{Proceedings of the 2009 IEEE International Symposium on Information Theory}, Seoul, Korea, July 2009. 




\bibitem{hoeffding1948class}
W.~Hoeffding, ``A class of statistics with asymptotically normal
  distribution,'' \emph{The annals of mathematical statistics}, pp. 293--325,
  1948.

\end{thebibliography}
\end{document}